\tikzstyle{ccyan}=[circle, draw, thick,fill=cyan!30, minimum size=12pt,inner sep=0pt]
\tikzstyle{cgrey}=[circle, draw, thick,fill=gray!30, minimum size=10pt,inner sep=0pt]
\tikzstyle{cgreys}=[circle, draw, thick,fill=gray!30, minimum size=12pt,inner sep=0pt]
\newcommand{\ket}[1]{| #1 \rangle}
\newcommand{\bra}[1]{\langle #1 |}
\newcommand{\braket}[2]{\langle #1 | #2 \rangle}
\def\U{\uparrow}
\def\D{\downarrow}
\def\LL{\leftarrow}
\def\R{\rightarrow}
\newcommand{\comment}[1]{}
\newtheorem{theorem}{Theorem}
\newtheorem{lemma}{Lemma}
\newtheorem{corollary}{Corollary}
\title{Adjacent vertices can be hard to find by quantum walks\footnote{This work was supported by the European Union Seventh Framework
Programme (FP7/2007-2013) under the QALGO (Grant Agreement No.~600700)
project and the RAQUEL (Grant Agreement No.~323970) project, the Latvian State Research Programme NeXIT
project No.~1, and the ERC Advanced Grant MQC.}}
\author{Nikolajs Nahimovs and Raqueline A. M. Santos}
\date{\small{Faculty of Computing, University of Latvia} \\ 
\small{Raina bulv. 19, Riga, LV-1586, Latvia}\\
\small{\texttt{nikolajs.nahimovs@lu.lv, rsantos@lu.lv}}}
\begin{document}

\maketitle


\begin{abstract}

\noindent
Quantum walks have been useful for designing quantum algorithms that outperform their classical versions for a variety of search problems.
Most of the papers, however, consider a search space containing a single marked element only.
We show that if the search space contains more than one marked element, their placement may drastically affect the performance of the search. 
More specifically, we study search by quantum walks on general graphs and show a wide class of configurations of marked vertices, for which search by quantum walk needs $\Omega(N)$ steps, that is, it has no speed-up over the classical exhaustive search.
The demonstrated configurations occur for certain placements of two or more adjacent marked vertices. The analysis is done for the two-dimensional grid and hypercube, and then is generalized for any graph. 

\end{abstract}

 
\section{Introduction}

Quantum walks are quantum counterparts of classical random walks \cite{Portugal:2013}. 
Similarly to classical random walks, there are two types of quantum walks: discrete-time quantum walks, first introduced by Aharonov~{\it et al.}~\cite{Aharonov:1993}, and continuous-time quantum walks, introduced by Farhi~{\it et al.}~\cite{Farhi:1998}.
For the discrete-time version, the step of the quantum walk is usually given by coin and shift operators, which are applied repeatedly. 
The coin operator acts on the internal state of the walker and rearranges the amplitudes of going to adjacent vertices. The shift operator moves the walker between the adjacent vertices.

Quantum walks have been useful for designing algorithms for a variety of search problems \cite{Ambainis:2005,Magniez:2005,Ambainis:2004}.
To solve a search problem using quantum walks, we introduce the notion of marked elements (vertices), corresponding to elements of the search space that we want to find.
We perform a quantum walk on the search space with one transition rule at the unmarked vertices, and another transition rule at the marked vertices. If this process is set up properly, it leads to a quantum state in which marked vertices have higher probability than the unmarked ones. This method of search using quantum walks was first introduced in \cite{Shenvi:2003}, which describes a quantum search in the hypercube, and has been used many times since then.

%
Not many papers in the literature consider search by quantum walks with multiple marked vertices.  
Wong~\cite{Wong:2016} analyzed the spatial search problem solved by continuous-time quantum walk on the simplex of complete graphs and showed that the location of marked vertices can dramatically influence the required jumping rate of the quantum walk. Wong and Ambainis~\cite{Wong:2015} analysed the discrete-time quantum walk on the simplex of complete graphs and showed that if one of the complete graphs is fully marked then there is no speed-up over classical exhaustive search.
Nahimovs and Rivosh~\cite{Nahimovs:2015a} studied the dependence of the running time  of the AKR algorithm~\cite{Ambainis:2005} on the number and the placement of marked locations. 
They found some ``exceptional configurations'' of marked vertices, for which the probability of finding any of the marked vertices does not grow over time. Another previously known exceptional configuration for the two-dimensional grid is the ``diagonal construction'' by Ambainis and Rivosh~\cite{Ambainis:2008}.

In this paper, we extend the work of Nahimovs and Rivosh~\cite{Nahimovs:2015}. 
We study search by quantum walks on general graphs with multiple marked vertices and show a wide class of configurations of marked vertices, for which the probability of finding any of the marked vertices does not grow over time. 
These configurations occur for certain placements of two and more adjacent marked vertices.
We prove that for such configurations the state of the algorithm is close to a stationary state.

We start by reviewing the simple example of the two-dimensional grid from~\cite{Nahimovs:2015} by showing that any pair of adjacent marked vertices forms an exceptional configuration.
The same construction is valid for the hypercube.
We extend the proof to general graphs by showing that any pair of adjacent marked vertices having the same degree $d$ forms an exceptional configuration, for which the probability of finding a marked vertex is limited by ${\textrm{const}} \cdot d^2/N$.
Then, we prove that any $k$-clique of marked vertices forms an exceptional configuration.
Additionally, we formulate general conditions for a state to be stationary given a configuration of marked vertices.
Our results greatly extend the class of known exceptional configurations.

\section{Two-dimensional grid}\label{sec:grid}

\subsection{Quantum walk on the two-dimensional grid}

Consider a two-dimensional grid of size $\sqrt{N}\times\sqrt{N}$ with periodic (torus-like) boundary conditions. Let us denote $n=\sqrt{N}$. 
The locations of the grid are labeled by the coordinates $(x,y)$ for $x, y \in \{0,\dots,n-1\}$.
The coordinates define a set of state vectors, $\ket{x,y}$, which span the Hilbert space, ${\cal{H_P}}$, associated to the position. 
Additionally, we define a 4-dimensional Hilbert space with the set of states $\{\ket{c}: c\in \{\LL,\R,\U,\D \}\}$, ${\cal{H_C}}$, associated with the direction. We refer to it as the direction or the coin subspace. The quantum walk has associated Hilbert space ${\cal{H_P}}\otimes{\cal{H_C}}$ with basis states $\ket{x,y,c}$ for $x,y \in \{0,\dots,n-1\}$ and $c \in \{\U,\D,\LL,\R\}$.

The evolution of a state of the walk is driven by the unitary operator $U = S\cdot (I\otimes C)$, where $S$ is the flip-flop shift operator
\begin{eqnarray}
S\ket{i,j,\U} & = & \ket{i,j+1,\D} \\
S\ket{i,j,\D} & = & \ket{i,j-1,\U} \\
S\ket{i,j,\LL} & = & \ket{i-1,j,\R} \\
S\ket{i,j,\R} & = & \ket{i+1,j,\LL},
\end{eqnarray}
and $C$ is the coin operator, given by the Grover's diffusion transformation 
\begin{equation}
C = \frac{1}{2} \left( 
\begin{array}{cccc}
-1 & 1 & 1 & 1 \\
1 & -1 & 1 & 1 \\
1 & 1 & -1 & 1 \\
1 & 1 & 1 & -1 
\end{array} \right).
\end{equation}

The spatial search algorithm uses the unitary operator $U' = S\cdot (I\otimes C)\cdot(Q\otimes I)$, where Q is the query transformation which flips the sign of marked vertices, that is, $Q\ket{x,y} = -\ket{x,y}$, if $(x,y)$ is marked and $Q\ket{x,y} = \ket{x,y}$, otherwise. 
The initial state of the algorithm is
\begin{equation}\label{eq:psi0_grid}
\ket{\psi(0)} = \frac{1}{\sqrt{4N}} \sum_{i,j=0}^{n-1} \big( \ket{i,j,\U} + \ket{i,j,\D} + \ket{i,j,\LL} + \ket{i,j,\R} \big).
\end{equation}
Note that $\ket{\psi(0)}$ is a 1-eigenvector of $U$ but not of $U'$. 
If there are marked vertices, the state of the algorithm starts to deviate from $\ket{\psi(0)}$.
In case of one marked vertex, after $O(\sqrt{N\log{N}})$ steps the inner product $\braket{\psi(t)}{\psi(0)}$ becomes close to $0$.
If we measure the state at this moment, we will find the marked vertex with $O(1 / \log{N})$ probability~\cite{Ambainis:2005}.
This gives the total running time of $O(\sqrt{N} \log{N})$ steps with amplitude amplification.

By analyzing the quantum search algorithm for a group of marked vertices of size $\sqrt{k}\times \sqrt{k}$, Ref.~\cite{Nahimovs:2015} identified that the algorithm does not work as expected when $k$ is even, meaning that the overlap of the initial state with the state at time $t$, $\braket{\psi(0)}{\psi(t)}$, stays close to 1. Moreover, the same effect holds for any block of size $2k\times l$ or $k\times 2l$, with $l$ and $k$ being positive integers. The reason for such behavior is that blocks of marked vertices form stationary states, as we are going to see below.

\subsection{Stationary states for the two-dimensional grid}

Consider a two-dimensional grid with two marked vertices $(i,j)$ and $(i,j+1)$.
Let $\ket{\phi_{stat}^a}$ be a state having amplitudes of all basis states equal to $a$ except for $\ket{i,j,\R}$ and $\ket{i+1,j,\LL}$, which have amplitudes equal to $-3a$ (see Fig.~\ref{fig:1x2_stationary_state}), that is,

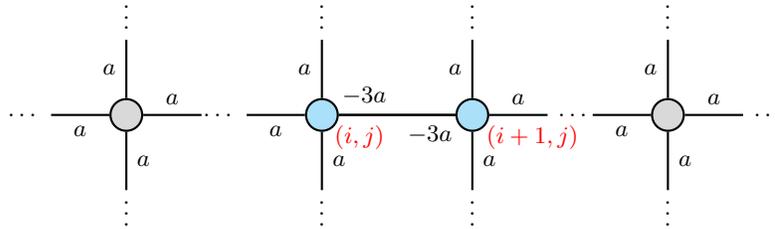
\begin{figure}[ht]
\centering
\begin{tikzpicture}[auto, thick]

\node[ccyan] (a1) at (-1,0) {} ;
\node[ccyan] (a2) at (1,0) {};
\node[red] at (1.8,-0.3) {\footnotesize{$(i+1,j)$}};
\node[red,thick,xshift=-1.8cm] at (1.3,-0.3) {\footnotesize{$(i,j)$}};
 
\draw[thick] (a1) -- node[above left]{\footnotesize{$-3a$}} (a2) ;
\draw[thick] (a1) -- node[below right]{\footnotesize{$-3a$}} (a2) ;
\draw[thick] (a2) -- node {\footnotesize{$a$}} (1,1);
\draw[thick] (a2) -- node {\footnotesize{$a$}} (2,0);
\draw[thick] (a2) -- node {\footnotesize{$a$}} (1,-1);
\draw[thick] (a1) -- node {\footnotesize{$a$}} (-1,1);
\draw[thick] (a1) -- node {\footnotesize{$a$}} (-2,0);
\draw[thick] (a1) -- node {\footnotesize{$a$}} (-1,-1);

\node[above] at (1,1) {\footnotesize{$\vdots$}};
\node[above] at (-1,1) {\footnotesize{$\vdots$}};
\node[below] at (1,-0.8) {\footnotesize{$\vdots$}};
\node[below] at (-1,-0.8) {\footnotesize{$\vdots$}};
\node[right] at (2,0) {\footnotesize{$\dots$}};
\node[left] at (-2,0) {\footnotesize{$\dots$}};

\begin{scope}[xshift=3.6cm]
\node[cgreys] (a3) at (0,0) {};
\draw[thick] (a3) -- node {\footnotesize{$a$}} (0,1);
\draw[thick] (a3) -- node {\footnotesize{$a$}} (1,0);
\draw[thick] (a3) -- node {\footnotesize{$a$}} (-1,0);
\draw[thick] (a3) -- node {\footnotesize{$a$}} (0,-1);
\node[below] at (0,-0.8) {\footnotesize{$\vdots$}};
\node[above] at (0,1) {\footnotesize{$\vdots$}};
\node[right] at (1,0) {\footnotesize{$\dots$}};
\end{scope}

\begin{scope}[xshift=-3.6cm]
\node[cgreys] (a3) at (0,0) {};
\draw[thick] (a3) -- node {\footnotesize{$a$}} (0,1);
\draw[thick] (a3) -- node {\footnotesize{$a$}} (1,0);
\draw[thick] (a3) -- node {\footnotesize{$a$}} (-1,0);
\draw[thick] (a3) -- node {\footnotesize{$a$}} (0,-1);
\node[below] at (0,-0.8) {\footnotesize{$\vdots$}};
\node[above] at (0,1) {\footnotesize{$\vdots$}};
\node[left] at (-1,0) {\footnotesize{$\dots$}};
\end{scope}
\end{tikzpicture}
\caption{The amplitudes of the state $\ket{\phi_{stat}^a}$. The vertices $(i,j)$ and $(i+1,j)$ are marked.}
\label{fig:1x2_stationary_state}
\end{figure}

\begin{equation}\label{eq:phi_grid}
\ket{\phi_{stat}^a} = \sum_{x,y=0}^{n-1}\sum_c a\ket{x,y,c} - 4a\ket{i,j,\R} - 4a\ket{i+1,j,\LL}.
\end{equation}
Then, this state is not changed by a step of the algorithm.

\begin{lemma}\label{lemma:grid}
Consider a grid of size $\sqrt{N} \times \sqrt{N}$ with two adjacent marked vertices $(i,j)$ and $(i+1,j)$. Then the state $\ket{\phi_{stat}^a}$, given by Eq.~(\ref{eq:phi_grid}), is not changed by the step of the algorithm, that is, $U'\ket{\phi_{stat}^a} = \ket{\phi_{stat}^a}$.
\end{lemma}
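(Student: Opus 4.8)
The plan is to apply the three factors of $U' = S\cdot(I\otimes C)\cdot(Q\otimes I)$ to $\ket{\phi_{stat}^a}$ one at a time and show that the composition returns the state unchanged. First I would apply the query $Q\otimes I$. Since only the two marked vertices $(i,j)$ and $(i+1,j)$ get their signs flipped, I only need to track the local four-dimensional coin vectors there; every unmarked vertex keeps the uniform coin vector $(a,a,a,a)$. At $(i,j)$ the amplitudes (in the order $\U,\D,\LL,\R$) are $(a,a,a,-3a)$ and become $(-a,-a,-a,3a)$; at $(i+1,j)$ they are $(a,a,-3a,a)$ and become $(-a,-a,3a,-a)$.

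Next I would apply the coin $I\otimes C$. The useful fact is that Grover's diffusion can be written $C = 2\ket{s}\bra{s}-I$, where $\ket{s}=\tfrac12(1,1,1,1)^{T}$ is the uniform coin state; hence $C$ fixes any vector proportional to $\ket{s}$ and acts as $-I$ on any coin vector whose four components sum to zero. On unmarked vertices the coin vector $(a,a,a,a)$ is proportional to $\ket{s}$ and is therefore unchanged. The key observation is that at each marked vertex the post-query coin vector has zero component sum ($-a-a-a+3a=0$ and $-a-a+3a-a=0$), so $C$ simply negates it, sending $(-a,-a,-a,3a)\mapsto(a,a,a,-3a)$ and $(-a,-a,3a,-a)\mapsto(a,a,-3a,a)$. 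Thus $(I\otimes C)(Q\otimes I)\ket{\phi_{stat}^a}=\ket{\phi_{stat}^a}$: the query and coin together act as the identity on this state.

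It then remains to show $S\ket{\phi_{stat}^a}=\ket{\phi_{stat}^a}$. Writing $\ket{\phi_{stat}^a}=a\sum_{b}\ket{b}-4a\ket{i,j,\R}-4a\ket{i+1,j,\LL}$, I would use that $S$ merely permutes the basis, so the uniform term $a\sum_b\ket{b}$ is preserved. For the two correction terms, the flip-flop shift gives $S\ket{i,j,\R}=\ket{i+1,j,\LL}$ and $S\ket{i+1,j,\LL}=\ket{i,j,\R}$, so $S$ swaps the two special basis states; since both carry the identical coefficient $-4a$, the correction terms are merely exchanged and the state is invariant. Combining the two computations yields $U'\ket{\phi_{stat}^a}=S\ket{\phi_{stat}^a}=\ket{\phi_{stat}^a}$.

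The computation is short, so there is no serious obstacle; the one point that needs care is the interplay between the query and the coin. The query is what turns the marked-vertex coin vectors into zero-sum vectors, which is precisely the subspace on which Grover's coin acts as $-I$; recognizing this is what makes the otherwise-special amplitudes $-3a$ fall back into place after the coin step. The remaining check is pure bookkeeping: confirming that the flip-flop shift swaps exactly the pair $\ket{i,j,\R},\ket{i+1,j,\LL}$ that carry the non-uniform coefficient.
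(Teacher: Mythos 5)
Your proof is correct and follows essentially the same route as the paper's: apply $Q\otimes I$, observe that the Grover coin fixes uniform coin vectors and negates zero-sum ones (so $(I\otimes C)(Q\otimes I)$ acts as the identity on $\ket{\phi_{stat}^a}$), and then check that the flip-flop shift only swaps equal amplitudes. Your explicit identification of the zero-sum coin subspace as the $-I$ eigenspace of $C=2\ket{s}\bra{s}-I$ is just a more detailed phrasing of the paper's ``the average is $0$, so applying the coin results in sign flip.''
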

\begin{proof}
Consider the effect of a step of the algorithm on $\ket{\phi_{stat}^a}$. The query transformation changes the signs of all the amplitudes of the marked vertices. The coin transformation performs an inversion about the average: for unmarked vertices, it does nothing, as all amplitudes are equal to $a$; for marked vertices, the average is $0$, so applying the coin results in sign flip. Thus, $(I\otimes C)(Q\otimes I)$ does nothing for the amplitudes of the non-marked vertices and twice flips the sign of the amplitudes of  the marked vertices. Therefore, we have $(I\otimes C)(Q\otimes I)\ket{\phi_{stat}^a} = \ket{\phi_{stat}^a}.$
The shift transformation swaps the amplitudes of near-by vertices. For $\ket{\phi_{stat}^a}$, it swaps $a$ with $a$ and $-3a$ with $-3a$. Thus, we have $S(I\otimes C)(Q\otimes I)\ket{\phi_{stat}^a} = \ket{\phi_{stat}^a}$.
\end{proof}

The initial state of the algorithm, given by Eq.~(\ref{eq:psi0_grid}),
can be written as 
\begin{equation}
\ket{\psi(0)} = \ket{\phi_{stat}^a} + 4a(\ket{i,j,\R} + \ket{i+1,j,\LL}),
\end{equation}
for $a=1/\sqrt{4N}$. Therefore, the only part of the initial state which is changed by the step of the algorithm is
\begin{equation}
\frac{2}{\sqrt{N}}(\ket{i,j,\R} + \ket{i+1,j,\LL}).
\end{equation}

\noindent
Let us establish an upper bound on the probability of finding a marked vertex,
\begin{equation}
p_M = \bra{\psi(t)}\left(\sum_{v\in M}\ket{v}\bra{v}\otimes I \right)\ket{\psi(t)},
\end{equation}
where $M$ is the set of marked vertices. 

\begin{lemma}
Consider a grid of size $\sqrt{N} \times \sqrt{N}$ with two adjacent marked vertices $(i,j)$ and $(i,j+1)$. Then for any number of steps, the probability of finding a marked vertex $p_M$ is $O\left(\frac{1}{N}\right)$.
\end{lemma}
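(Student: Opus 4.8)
The plan is to exploit the decomposition of the initial state into a stationary part plus a small perturbation, and to track only the perturbation through the evolution. By Lemma~\ref{lemma:grid} the state $\ket{\phi_{stat}^a}$ is fixed by $U'$ (for the vertically adjacent pair $(i,j),(i,j+1)$ the stationary state is obtained by the symmetric construction, placing $-3a$ on the two coins pointing toward each other), so $U'^t\ket{\phi_{stat}^a} = \ket{\phi_{stat}^a}$ for every $t$. Writing $\ket{\psi(0)} = \ket{\phi_{stat}^a} + \ket{\delta}$ with $\ket{\delta} = \frac{2}{\sqrt{N}}\big(\ket{i,j,\R} + \ket{i+1,j,\LL}\big)$ and using linearity of $U'$, I would obtain
\begin{equation}
\ket{\psi(t)} = U'^t\ket{\psi(0)} = \ket{\phi_{stat}^a} + U'^t\ket{\delta},
\end{equation}
so that the entire time dependence of the state is carried by the single term $U'^t\ket{\delta}$.

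Next I would control the size of this term. Since the two basis states appearing in $\ket{\delta}$ are orthogonal, $\|\ket{\delta}\| = \frac{2\sqrt{2}}{\sqrt{N}}$, and because $U'$ is unitary this norm is preserved: $\|U'^t\ket{\delta}\| = \frac{2\sqrt{2}}{\sqrt{N}}$ for all $t$. In other words, no matter how many steps are applied, $\ket{\psi(t)}$ never moves further than $O(1/\sqrt{N})$ away from the fixed stationary state.

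Now let $P_M = \sum_{v\in M}\ket{v}\bra{v}\otimes I$ be the projector onto the marked vertices, so that $p_M = \|P_M\ket{\psi(t)}\|^2$. By the triangle inequality,
\begin{equation}
\|P_M\ket{\psi(t)}\| \le \|P_M\ket{\phi_{stat}^a}\| + \|P_M\,U'^t\ket{\delta}\|.
\end{equation}
The second term is at most $\|U'^t\ket{\delta}\| = O(1/\sqrt{N})$, since a projector has operator norm $1$. For the first term I would compute $\|P_M\ket{\phi_{stat}^a}\|$ directly: on each of the two marked vertices $\ket{\phi_{stat}^a}$ has three coin amplitudes equal to $a$ and one equal to $-3a$, whence $\|P_M\ket{\phi_{stat}^a}\|^2 = 2(3a^2 + 9a^2) = 24a^2 = 6/N$ using $a = 1/\sqrt{4N}$, again $O(1/\sqrt{N})$. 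Adding the two $O(1/\sqrt{N})$ amplitude contributions and squaring yields $p_M = O(1/N)$, uniformly in $t$.

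The argument becomes routine once the decomposition is set up, so the difficulty is conceptual rather than computational: the step that must not be overlooked is that the stationary state \emph{itself} has nonzero overlap with the marked set, so $p_M$ does not vanish identically. The content of the lemma is precisely that this residual overlap, together with the unitarily frozen perturbation $U'^t\ket{\delta}$, stays only $O(1/\sqrt{N})$ in amplitude and hence $O(1/N)$ in probability for all times. A minor point requiring care is that $\ket{\phi_{stat}^a}$ is not exactly normalized ($\|\ket{\phi_{stat}^a}\|^2 = 1 + O(1/N)$), but this perturbs none of the orders above.
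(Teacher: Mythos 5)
Your proposal is correct and follows essentially the same route as the paper: decompose the initial state into the stationary state of Lemma~\ref{lemma:grid} plus a perturbation of norm $O(1/\sqrt{N})$, and use unitarity to conclude that the perturbation never grows. The paper's own proof simply defers to Theorem~\ref{thm:2_pm} with $d=4$ and $m=2N$ (which finishes by explicitly maximizing the marked-vertex probability subject to the fixed perturbation norm), whereas your triangle-inequality bound $\|P_M\ket{\psi(t)}\| \le \|P_M\ket{\phi_{stat}^a}\| + \|U'^t\ket{\delta}\|$ is an equally valid and arguably cleaner way to conclude $p_M = O(1/N)$.
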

\begin{proof}
Follows from the proof of Theorem~\ref{thm:2_pm} by substituting $d=4$ and $m=2N$.
%
\end{proof}

Fig.~\ref{fig:plot_grid_2_marked} shows the absolute value of the overlap, $|\braket{\psi(0)}{\psi(t)}|$, and the probability of finding a marked vertex, $p_M$, during the first 100 steps of the walk for a grid of size $50\times 50$ and two different sets of marked vertices. In the first case (solid line), we have two adjacent marked vertices, $M=\{(0,0), (0,1)\}$ and in the second case (dashed line), we have $M=\{(0,0),(0,2)\}$. Clearly, one can see the effect of the stationary state on the evolution. If the two marked vertices are adjacent, the overlap stays closes to 1 and the probability of finding a marked vertex stays close to the probability in the initial state. If the two marked vertices are not adjacent, the quantum walk behaves as expected (as in the single marked vertex case).

\begin{figure}[!htb]
\centering
\includegraphics[width=2.8in]{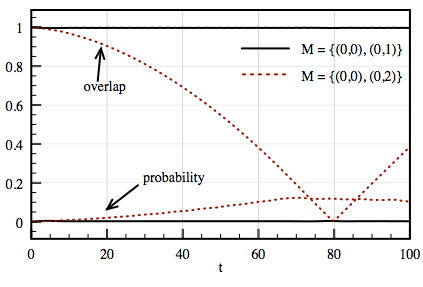}
\caption{Probability of finding a marked vertex, $p_M$, and absolute value of the overlap, $|\braket{\psi(0)}{\psi(t)}|$, for the first 100 steps of the quantum walk on a grid of size $50 \times 50$. (Solid line) We have two adjacent marked vertices, $(0,0)$ and $(0,1)$. (Dashed line) We have two non-adjacent marked vertices, $(0,0)$ and $(0,2)$.} \label{fig:plot_grid_2_marked}
\end{figure}

Note that if we have a block of marked vertices of size $k \times m$ we can construct a stationary state as long as we can tile the block by the blocks of size $1 \times 2$ and $2 \times 1$. For example, consider $M = \{(0,0),(0,1),(2,0),(3,0)\}$ for $n \geq 3$. Then the stationary state is given by
\begin{equation}
\ket{\phi_{stat}^a} = \sum_{x,y=0}^{n-1}\sum_c a\ket{x,y,c} - 4a\ket{0,0,\R} - 4a\ket{0,1,\LL} - 4a\ket{2,0,\U} - 4a\ket{3,0,\D}.
\end{equation}
More details on alternative constructions of stationary states for blocks of marked vertices on the two-dimensional grid can be found in~\cite{Nahimovs:2015}.

\section{Hypercube}\label{sec:hypercube}

\subsection{Quantum walk on the hypercube}

The $n$-dimensional hypercube is a graph with $N=2^n$ vertices where each vertex has degree $n$. The discrete-time quantum walk has associated Hilbert space ${\cal{H}} ^{2^n} \otimes {\cal{H}}^n$. The evolution operator is given by $U = S\cdot (I \otimes C)$, where the shift operator, $S$, acts in the following way
\begin{equation}
S\ket{\vec{v}}\ket{c} = \ket{\vec{v}\oplus \vec{e_c}}\ket{c},
\end{equation}
with $\vec{v}$ being the binary representation of $v$.
This operator moves the walker from state $\ket{\vec{v}}\ket{c}$ to $\ket{\vec{v}\oplus \vec{e_c}}\ket{c}$, where $\vec{e_c}$ is the binary vector with 1 in the $c$-th position. Note, that vertices are connected to each other if they differ in only one bit position. The coin transformation is the Grover coin $C = 2\ket{s}\bra{s} - I$, where $\ket{s} = \frac{1}{\sqrt{n}}\sum_{i=0}^{n-1}\ket{i}$. 

Searching for marked vertices in the hypercube is done by using the unitary operator $U'= S \cdot (I \otimes C) \cdot (Q \otimes I)$, where $Q$ is the query transformation, which flips the sign of marked vertices. The initial state of the algorithm is given by
\begin{equation}\label{eq:psi0_hyper}
\ket{\psi(0)} = \frac{1}{\sqrt{nN}}\sum_{\vec{v}=0}^{N-1}\sum_{c=0}^{n-1}\ket{\vec{v}}\ket{c}.
\end{equation}

In case of SKW algorithm with one marked vertex~\cite{Shenvi:2003}, if we measure the state of the quantum walk after $O(\sqrt{N})$ time steps, we will find the marked vertex with probability $1/2-O(1/n)$. Hence, we expect to repeat the algorithm a constant number of times, which  gives the total running time of $O(\sqrt{N})$ steps.

\subsection{Stationary states for the hypercube}

Consider a hypercube with two adjacent marked vertices $i$ and $j$. Without loss of generality, suppose $\vec{i}$ and $\vec{j}$ differ in the first bit.
Let $\ket{\phi_{stat}^a}$ be a state having amplitudes of all basis states equal to $a$ except for $\ket{\vec{i},0}$ and $\ket{\vec{j},0}$ which have amplitudes equal to $-(n-1)a$ (see Fig.~\ref{fig:2_vertex_hyper}), that is,
\begin{equation}\label{eq:phi_hyper}
\ket{\phi_{stat}^a} = a\sum_{v=0}^{N-1}\sum_{c=0}^{n-1} \ket{v,c} -an\left(\ket{\vec{i},0}+\ket{\vec{j},0}\right).
\end{equation}

\begin{figure}[!htb]
\centering
\begin{tikzpicture}[auto, thick]

\node[ccyan] (a1) at (-1,0) {\footnotesize{$i$}} ;
\node[ccyan] (a2) at (1,0) {\footnotesize{$j$}};

\draw[thick] (a1) -- node{\footnotesize{$-(n-1)a$}} (a2) ;
\draw[thick] (a1) -- node[yshift=-15pt]{\footnotesize{$-(n-1)a$}} (a2) ;
\draw[thick] (a2) -- (2,1) node[right] {\footnotesize{$\dots$}};
\draw[thick] (a2) -- (2.3,0.3) node[right] {\footnotesize{$\dots$}};
\draw[thick] (a2) -- (2,-1) node[right] {\footnotesize{$\dots$}};
\draw[thick] (a1) -- (-2,1) node[left] {\footnotesize{$\dots$}};
\draw[thick] (a1) -- (-2.3,0.3) node[left] {\footnotesize{$\dots$}};
\draw[thick] (a1) -- (-2,-1) node[left] {\footnotesize{$\dots$}};

\node at (1.4,0.7) {\footnotesize{$a$}};
\node at (1.6,0.3) {\footnotesize{$a$}};
\node at (1.3,-0.6) {\footnotesize{$a$}};
\node at (1.7,-0.2) {\footnotesize{$\vdots$}};

\node at (-1.4,0.7) {\footnotesize{$a$}};
\node at (-1.6,0.3) {\footnotesize{$a$}};
\node at (-1.3,-0.6) {\footnotesize{$a$}};
\node at (-1.7,-0.2) {\footnotesize{$\vdots$}};

\end{tikzpicture}

\caption{Amplitudes of the stationary state in an $n$-dimensional hypercube with two adjacent marked vertices $i$ and $j$.}
\label{fig:2_vertex_hyper}
\end{figure}
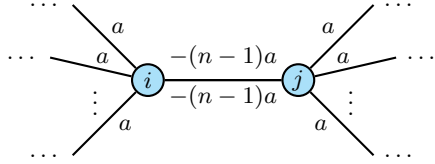

\begin{lemma}
Consider an $n$-dimensional hypercube with two adjacent marked vertices $i$ and $j$. Then $\ket{\phi_{stat}^a}$, given by Eq.~(\ref{eq:phi_hyper}), is not changed by a step of the algorithm, that is, $U'\ket{\phi_{stat}^a} = \ket{\phi_{stat}^a}$.
\end{lemma}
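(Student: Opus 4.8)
The plan is to mirror the structure of the grid proof (Lemma~\ref{lemma:grid}) exactly, since the stationary state $\ket{\phi_{stat}^a}$ for the hypercube is constructed in complete analogy: a uniform amplitude $a$ everywhere, perturbed only on the two coin states pointing along the edge that joins the two marked vertices $i$ and $j$. First I would decompose the step operator $U' = S\cdot(I\otimes C)\cdot(Q\otimes I)$ and track the effect of each of the three factors separately, verifying that the composition returns $\ket{\phi_{stat}^a}$ unchanged.

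\medskip

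\noindent\textbf{The three factors.} I would argue that the \emph{coin-query block} $(I\otimes C)(Q\otimes I)$ fixes $\ket{\phi_{stat}^a}$ by the same averaging argument as in the grid case. At every unmarked vertex all $n$ coin amplitudes equal $a$, so the query does nothing there and the Grover coin $C=2\ket{s}\bra{s}-I$ is the identity on a uniform register. At each marked vertex the query flips all $n$ amplitudes, turning the coin register into $\{(n-1)a, -a, -a, \dots, -a\}$ (the single $-(n-1)a$ on coin $0$ becomes $(n-1)a$, and the $n-1$ copies of $a$ become $-a$). The key arithmetic check is that this new register has average zero: $(n-1)a + (n-1)(-a) = 0$. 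When the average is zero, the Grover diffusion $2\ket{s}\bra{s}-I$ acts as a global sign flip, sending the register back to $\{-(n-1)a, a, a, \dots, a\}$, i.e. the original amplitudes. Hence $(I\otimes C)(Q\otimes I)\ket{\phi_{stat}^a}=\ket{\phi_{stat}^a}$, exactly as in the grid proof where the relevant average was also $0$.

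\medskip

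\noindent\textbf{The shift.} Finally I would check that the flip-flop shift $S\ket{\vec v,c}=\ket{\vec v\oplus\vec e_c,c}$ leaves the state invariant. Because the amplitude pattern is constant ($=a$) except on the two distinguished states $\ket{\vec i,0}$ and $\ket{\vec j,0}$, the shift merely permutes equal amplitudes ($a\leftrightarrow a$) everywhere except possibly on the edge along direction $0$ joining $i$ and $j$. The one thing to confirm is that $S$ swaps the two special states \emph{with each other}: since $\vec i$ and $\vec j$ differ precisely in the first bit, we have $\vec i\oplus\vec e_0=\vec j$ and $\vec j\oplus\vec e_0=\vec i$, so $S\ket{\vec i,0}=\ket{\vec j,0}$ and $S\ket{\vec j,0}=\ket{\vec i,0}$. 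Both carry amplitude $-(n-1)a$, so this swaps $-(n-1)a$ with $-(n-1)a$ and fixes the state. Combining the three factors gives $U'\ket{\phi_{stat}^a}=\ket{\phi_{stat}^a}$.

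\medskip

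\noindent I expect no genuine obstacle here; the only place demanding care is the averaging computation at a marked vertex, where one must confirm that the post-query coin register sums to zero so that Grover diffusion reduces to a pure sign flip. This is precisely why the perturbation was chosen to be $-(n-1)a$ rather than the grid's $-3a$: the balance condition $1\cdot(n-1)a=(n-1)\cdot a$ is exactly what makes the mean vanish, and it is the degree-dependent generalization of the grid's $-3a=-(d-1)a$ with $d=4$.
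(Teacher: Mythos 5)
Your proof is correct and follows exactly the argument the paper intends: the paper's own proof is simply ``Similar to proof of Lemma~\ref{lemma:grid},'' and you have carried out that same query--coin--shift analysis with the correct degree-dependent arithmetic (the post-query coin register at a marked vertex sums to $(n-1)a+(n-1)(-a)=0$, so the Grover coin acts as $-I$, and the shift swaps $\ket{\vec i,0}$ with $\ket{\vec j,0}$, both carrying $-(n-1)a$). No gaps; this is the paper's approach, just written out in full.
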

\begin{proof}
Similar to proof of Lemma~\ref{lemma:grid}.
\end{proof}

The initial state of the algorithm, given by Eq.~(\ref{eq:psi0_hyper}), can be written as 
\begin{equation}
\ket{\psi(0)} = \ket{\phi_{stat}^a} +an\left(\ket{\vec{i},0}+\ket{\vec{j},0}\right),
\end{equation}
for $a=1/\sqrt{nN}$. Therefore, the only part of the initial state, which is changed by a step of the algorithm is 
\begin{equation}
\sqrt{\frac{n}{N}}\left(\ket{\vec{i},0}+\ket{\vec{j},0}\right).
\end{equation}
From this fact, we can establish an upper bound for the probability of finding a marked vertex. 

\begin{lemma}
Consider an $n$-dimensional hypercube with two adjacent marked vertices $i$ and $j$. Then for any number of steps, the probability of finding a marked vertex $p_M$ is $O\left(\frac{n^2}{N}\right)$.
\end{lemma}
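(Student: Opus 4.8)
The plan is to reduce everything to the stationarity already established and to the unitarity of $U'$. Write $\ket{\delta} := \sqrt{n/N}\,(\ket{\vec{i},0}+\ket{\vec{j},0})$ for the non-stationary part of the initial state identified just above. Since the preceding lemma gives $U'\ket{\phi_{stat}^a}=\ket{\phi_{stat}^a}$, applying $U'$ repeatedly yields, for every $t$,
\begin{equation}
\ket{\psi(t)} = U'^{\,t}\ket{\psi(0)} = \ket{\phi_{stat}^a} + \ket{\xi(t)}, \qquad \ket{\xi(t)} := U'^{\,t}\ket{\delta}.
\end{equation}
The whole point of this decomposition is that the ``moving'' part $\ket{\xi(t)}$ has norm independent of $t$: because $U'$ is unitary, $\|\ket{\xi(t)}\| = \|\ket{\delta}\| = \sqrt{2n/N}$. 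This is exactly what lets me bound $p_M$ uniformly over all step counts.

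Next I would write $p_M = \|\Pi_M\ket{\psi(t)}\|^2$, where $\Pi_M := \sum_{v\in M}\ket{v}\bra{v}\otimes I$ is the orthogonal projector onto the marked vertices, and split it by the triangle inequality,
\begin{equation}
\|\Pi_M\ket{\psi(t)}\| \;\le\; \|\Pi_M\ket{\phi_{stat}^a}\| + \|\Pi_M\ket{\xi(t)}\|.
\end{equation}
The second term is controlled immediately by unitarity, $\|\Pi_M\ket{\xi(t)}\| \le \|\ket{\xi(t)}\| = \sqrt{2n/N}$, uniformly in $t$. The first term is time-independent and read off from Eq.~(\ref{eq:phi_hyper}): at each of the two marked vertices the coin register carries one amplitude $-(n-1)a$ and $n-1$ amplitudes $a$, so bounding each of the $2n$ marked amplitudes by $(n-1)a$ gives $\|\Pi_M\ket{\phi_{stat}^a}\|^2 \le 2n\,((n-1)a)^2 = O(n^3 a^2) = O(n^2/N)$, using $a^2 = 1/(nN)$.

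Combining the two estimates yields $\|\Pi_M\ket{\psi(t)}\| = O(n/\sqrt{N})$ for all $t$, hence $p_M = O(n^2/N)$ as claimed. I note in passing that evaluating the stationary term exactly, $\|\Pi_M\ket{\phi_{stat}^a}\|^2 = 2(n-1)/N$, even gives the sharper $p_M = O(n/N)$; the crude bound above is kept only to match the uniform $O(d^2/N)$ form of the general statement.

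The argument carries no genuine obstacle: the one delicate point in a naive attack---bounding $p_M$ simultaneously for \emph{every} number of steps, which would ordinarily demand the full spectral picture of $U'$---has already been dissolved by the stationary-state decomposition, which transfers the entire time dependence onto the single fixed vector $\ket{\delta}$ of small norm. What remains is the elementary norm bookkeeping sketched above, and the analogue for a general graph follows by replacing $n$ with the common degree $d$ throughout, which is precisely the content of Theorem~\ref{thm:2_pm}.
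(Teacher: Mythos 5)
Your proposal is correct, and it rests on the same central idea as the paper: split $\ket{\psi(0)}$ into the stationary state $\ket{\phi_{stat}^a}$ plus a residual of norm $\sqrt{2n/N}$, and use unitarity of $U'$ to conclude that the residual keeps this norm for all $t$. Where you diverge is in the final bookkeeping. The paper's proof of this lemma simply defers to Theorem~\ref{thm:2_pm}, whose proof bounds $p_M$ by an explicit constrained maximization: it parametrizes the worst-case marked-vertex amplitudes as $a+x_1$ and $-(d-1)a-x_2$ subject to $2(d-1)x_1^2+2x_2^2 = \|\ket{\phi_{i,j}^a}\|^2$ and maximizes the resulting probability. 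You instead bound $\|\Pi_M\ket{\psi(t)}\| \le \|\Pi_M\ket{\phi_{stat}^a}\| + \|\Pi_M\ket{\xi(t)}\|$ and use $\|\Pi_M\ket{\xi(t)}\|\le\|\ket{\xi(t)}\|$, which is more elementary, manifestly uniform in $t$, and sidesteps the implicit assumption in the paper's optimization that the entire perturbation concentrates on the marked coin amplitudes (there it is the worst case by construction; your projector bound makes this rigorous with no case analysis). Your side remark that the sharp bound is $O(n/N)$ is also consistent with the paper: substituting $d=n$ and $m=nN/2$ into the paper's $O(d^2/m)$ gives $O(n/N)$, so the lemma's stated $O(n^2/N)$ is simply not tight. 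One trivial point: the paper defines $\ket{\psi(t)}=U^t\ket{\psi(0)}$ in Theorem~\ref{thm:2_pm}, evidently a typo for $U'^{\,t}$; you correctly work with $U'$.
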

\begin{proof}
Follows from the proof of Theorem~\ref{thm:2_pm} by substituting $d=n$ and $m=(nN)/2$.
\end{proof}

Fig.~\ref{fig:plot_hyper_2_marked} shows the probability of finding a marked vertex and the absolute value of the overlap, $|\braket{\psi(0)}{\psi(t)}|$, for a hypercube of dimension $n=10$ for the first 100 steps of the algorithm. We consider two different sets of marked vertices. In the first case (solid line), we have two adjacent marked vertices $M=\{0,1\}$. In this case, the overlap stays close to 1 and the probability stays close to the probability in the initial state, because the quantum walk has a stationary state.
\begin{figure}[!htb]
\centering
\includegraphics[width=2.8in]{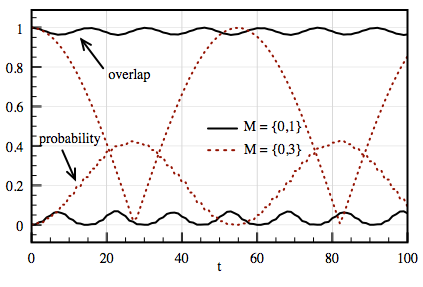}
\caption{Probability of finding a marked vertex, $p_M$, and absolute value of the overlap, $|\braket{\psi(0)}{\psi(t)}|$, for 100 steps of the quantum walk on hupercube with $N = 2^{10}$ vertices. (Solid line) We have two adjacent marked vertices, $0$ and $1$. (Dashed line) We have two non-adjacent marked vertices, $0$ and $3$.} \label{fig:plot_hyper_2_marked}
\end{figure}
In the second case (dashed line), we have two non-adjacent marked vertices $M=\{0,3\}$. 
As one can see, the behavior in the second case is very different from the behavior in the first case.

Note that any set of marked vertices which can be divided into unique blocks of two adjacent marked vertices will result in a stationary state.

\section{General graphs}\label{sec:general}

\subsection{Quantum walks on general graphs}

Consider a graph $G = (V, E)$ with a set of vertices $V$ and a set of edges $E$. Let $n = |V|$ and $m = |E|$. 
The discrete-time quantum walk on $G$ has associated Hilbert space ${\cal H}^{2m}$ with the set of basis states $\{\ket{v,c}: v \in V, 0 \leq c < d_v \}$, where $d_v$ is the degree of vertex $v$. Note, that the state $\ket{v,c}$ cannot be written as $\ket{v}\otimes\ket{c}$ unless $G$ is regular. 

The evolution operator is given by
\begin{equation}
U= SC.
\label{eq:U}
\end{equation}
The coin transformation $C$ is the direct sum of coin transformations for individual vertices, i.e. $C = C_{d_1}\bigoplus \cdots\bigoplus C_{d_n}$ with $C_{d_i}$ being the Grover diffusion transformation of dimension $d_i$.
The shift operator $S$ acts in the following way,
\begin{equation}
S\ket{v,c} = \ket{v',c'},
\label{eq:S}
\end{equation}
where $v$ and $v'$ are adjacent, $c$ and $c'$ represent the directions that points $v$ to $v'$ and $v'$ to $v$, respectively.  

Searching for marked vertices is done by using the unitary operator
\begin{equation}
U' = SCQ,
\end{equation}
where $Q$ is the query transformation, which flips the signs of the amplitudes at the marked vertices, that is,
\begin{equation}
Q = I - 2\sum_{w\in M}\sum_{c=0}^{d_w-1}\ket{w,c}\bra{w,c},
\end{equation}
with $M$ being the set of marked vertices. The initial state of the algorithm is the equal superposition over all vertex-direction pairs:
\begin{equation}\label{eq:psi0_gen}
\ket{\psi(0)} = \frac{1}{\sqrt{2m}} 
\sum_{v=0}^{n-1} \sum_{c = 0}^{d_v-1} \ket{v,c}.
\end{equation}
It can be easily verified that the initial state stays unchanged by the evolution operator $U$, regardless of the number of steps.

The running time of the algorithm depends on both the structure of the graph as well as the placement of marked vertices. Next, we describe a class of exceptional configurations of marked vertices, for which the probability of finding a marked vertex is limited.

\subsection{Stationary states for general graphs}

It is not difficult to see that the construction of stationary states for the two-dimensional grid and the hypercube can be generalized to any graph.
First, we consider only two adjacent marked vertices. Then, we show we can also construct a stationary state for three adjacent marked vertices. Later, we describe  general conditions for a state to be stationary.

\subsubsection{Two adjacent marked vertices}

Consider a graph $G = (V, E)$ with two adjacent marked vertices $i$ and $j$ with the same degree, that is, $d_i = d_j = d$.
Let $\ket{\phi_{stat}^a}$ be a state having all amplitudes equal to $a$ except of the amplitude of vertex $i$ pointing to vertex $j$ and amplitude of vertex $j$ pointing to vertex $i$, which are equal to $-(d-1)a$. Fig.~\ref{fig:2_vertex_symmetric_stationary_state} shows the configuration of the amplitudes in the marked vertices.
Then, this state is not changed by a step of the algorithm.

\begin{figure}[!htb]
\centering
\begin{tikzpicture}[auto, thick]

\node[ccyan] (a1) at (-1,0) {\footnotesize{$i$}} ;
\node[ccyan] (a2) at (1,0) {\footnotesize{$j$}};

\draw[thick] (a1) -- node{\footnotesize{$-(d-1)a$}} (a2) ;
\draw[thick] (a1) -- node[yshift=-17pt]{\footnotesize{$-(d-1)a$}} (a2) ;
\draw[thick] (a2) -- (2,1);
\draw[thick] (a2) -- (2.3,0.3);
\draw[thick] (a2) -- (2,-1);
\draw[thick] (a1) -- (-2,1);
\draw[thick] (a1) -- (-2.3,0.3);
\draw[thick] (a1) -- (-2,-1);

\node at (1.4,0.7) {\footnotesize{$a$}};
\node at (1.6,0.3) {\footnotesize{$a$}};
\node at (1.3,-0.6) {\footnotesize{$a$}};
\node at (1.7,-0.2) {\footnotesize{$\vdots$}};

\node at (-1.4,0.7) {\footnotesize{$a$}};
\node at (-1.6,0.3) {\footnotesize{$a$}};
\node at (-1.3,-0.6) {\footnotesize{$a$}};
\node at (-1.7,-0.2) {\footnotesize{$\vdots$}};

\draw [decorate,thick,black,decoration={brace,amplitude=10pt},xshift=0pt,yshift=0pt]
(2.5,1.1) -- (2.5,-1.1)  node [black,midway,xshift=9pt] {\footnotesize $d-1$};

\draw [decorate,thick,black,decoration={brace,mirror,amplitude=10pt},xshift=0pt,yshift=0pt]
(-2.5,1.1) -- (-2.5,-1.1)  node [black,midway,xshift=-35pt] {\footnotesize $d-1$};

\end{tikzpicture}

\caption{The amplitudes for the stationary state with two adjacent marked vertices $i$ and $j$.}
\label{fig:2_vertex_symmetric_stationary_state}
\end{figure}
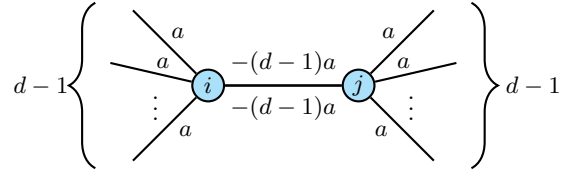

\begin{theorem}
\label{thm:2_vertex_stationary_state}
Let $G = (V,E)$ be a graph with two adjacent marked vertices $i$ and $j$ with $d_i = d_j = d$, and let
\begin{equation}
\ket{\phi_{i,j}^a} = -ad\left(\ket{i,c_{(i,j)}} + \ket{j,c_{(j,i)}}\right),
\end{equation}
where $c_{(i,j)}$ represents the direction which points vertex $i$ to vertex $j$.
Then,
\begin{equation}
\ket{\phi_{stat}^a} = a\sum_{v=0}^{n-1}\sum_{c=0}^{d_v-1} \ket{v,c} + \ket{\phi_{i,j}^a},
\end{equation}
is not affected by a step of the algorithm, that is, $U'\ket{\phi_{stat}^a}=\ket{\phi_{stat}^a}$. 
\end{theorem}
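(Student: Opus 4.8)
The plan is to verify directly that each of the three operators $Q$, $C$, and $S$ leaves $\ket{\phi_{stat}^a}$ invariant (or acts in a controlled, self-cancelling way), following exactly the template of Lemma~\ref{lemma:grid} but now tracking the degree $d$ explicitly rather than the special value $4$. The state $\ket{\phi_{stat}^a}$ has all amplitudes equal to $a$ except the two ``inward'' amplitudes $\ket{i,c_{(i,j)}}$ and $\ket{j,c_{(j,i)}}$, which equal $-(d-1)a$; note that $a + (-ad) = -(d-1)a$, so the decomposition in the statement is consistent with Fig.~\ref{fig:2_vertex_symmetric_stationary_state}.

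\textbf{Step 1 (query).} First I would apply $Q$. Since $Q$ flips the sign of all $d$ amplitudes sitting on each marked vertex, the $d-1$ amplitudes of value $a$ at vertex $i$ become $-a$ and the single inward amplitude $-(d-1)a$ becomes $+(d-1)a$; likewise at $j$. Unmarked vertices are untouched.

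\textbf{Step 2 (coin).} Next I would apply $C$, which on each vertex $v$ is the Grover diffusion on its $d_v$ amplitudes, i.e.\ inversion about their average. For every unmarked vertex all amplitudes equal $a$, so the average is $a$ and the coin acts as the identity there. The key computation is at the marked vertices: after the query, vertex $i$ carries $d-1$ amplitudes of value $-a$ and one of value $+(d-1)a$, whose average is $\frac{(d-1)(-a)+(d-1)a}{d}=0$. Inversion about a zero average is just a global sign flip, so $C$ sends $-a\mapsto a$ and $(d-1)a\mapsto -(d-1)a$, exactly undoing the query. Hence $CQ\ket{\phi_{stat}^a}=\ket{\phi_{stat}^a}$, and this is the one place where the hypotheses $d_i=d_j=d$ and the specific coefficient $-(d-1)a$ are used in tandem — so this is the \emph{main obstacle} in the sense that it is the only genuinely non-routine verification.

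\textbf{Step 3 (shift).} Finally I would apply $S$. The shift permutes the $2m$ basis states by swapping, along each edge, the amplitude that points from one endpoint to the other with the amplitude pointing back. Every such swap exchanges two amplitudes of equal value: ordinary edges swap $a$ with $a$, while the marked edge $\{i,j\}$ swaps the two inward amplitudes $-(d-1)a$ and $-(d-1)a$ with each other. Since $S$ only ever exchanges equal values, it fixes the state, giving $SCQ\ket{\phi_{stat}^a}=U'\ket{\phi_{stat}^a}=\ket{\phi_{stat}^a}$. The only subtlety worth stating explicitly is that the two special amplitudes lie on the \emph{same} edge $\{i,j\}$ and point toward each other, so the shift maps the pair onto itself; this is precisely why adjacency of $i$ and $j$ is required.
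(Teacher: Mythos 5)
Your proposal is correct and follows essentially the same route as the paper's own proof: verify that $CQ$ acts as a double sign flip on the marked vertices (using that the amplitudes at each marked vertex sum to zero) and as the identity elsewhere, and that $S$ only ever swaps equal amplitudes, including the two $-(d-1)a$ amplitudes on the edge $\{i,j\}$. Your version merely spells out the intermediate amplitudes after the query more explicitly than the paper does.
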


\begin{proof}
Consider the effect of a step of the algorithm. The query transformation changes the sign of all amplitudes of the marked vertices. The coin flip performs an inversion about the average of these amplitudes: for unmarked vertices, it does nothing as all amplitudes are equal to $a$; for marked vertices, the average is $0$, so it results in sign flip. Thus, $CQ$ does nothing for the amplitudes of the unmarked vertices and twice flips the sign of amplitudes of the marked vertices. Therefore, we have 
\begin{equation}
CQ\ket{\phi_{stat}^a} = \ket{\phi_{stat}^a}.
\end{equation}
The shift transformation swaps amplitudes of adjacent vertices. For $\ket{\phi_{stat}^a}$, it swaps $a$ with $a$ and $-(d-1)a$ with $-(d-1)a$. 
Thus, we have 
\begin{equation}
SCQ\ket{\phi_{stat}^a} = \ket{\phi_{stat}^a}.
\end{equation}
\end{proof}

The initial state of the algorithm $\ket{\psi(0)}$, given by Eq.~(\ref{eq:psi0_gen}), can be written as 
\begin{equation}
\ket{\psi(0)} = \ket{\phi_{stat}^a} -\ket{\phi_{i,j}^a},
\end{equation}
for $a=1/\sqrt{2m}$. Therefore, the only part of the initial state which is changed by a step of the algorithm is $\ket{\phi_{i,j}^a}$. From this fact, we can establish an upper bound for the probability of finding a marked vertex.

\begin{theorem}\label{thm:2_pm}
Let $G = (V,E)$ be a graph with two adjacent marked vertices $i$ and $j$ with $d_i = d_j = d$, and let the probability of finding a marked vertex be given by
\begin{equation}\label{eq:pm_general}
p_M = \bra{\psi(t)}\left(\sum_{v \in M}\sum_{c=0}^{d_v-1}\ket{v,c}\bra{v,c}\right)\ket{\psi(t)},
\end{equation}
where $\ket{\psi(t)} = U^t\ket{\psi(0)}$.
Then,
$p_M = O\left(\frac{d^2}{m}\right)$, 
where $m$ is the number of edges of the graph.
\end{theorem}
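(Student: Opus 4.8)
The plan is to exploit the decomposition of the initial state into the stationary part $\ket{\phi_{stat}^a}$ and the small perturbation $\ket{\phi_{i,j}^a}$ already established above, and to bound the marked-vertex weight of each part separately. Writing $\Pi_M = \sum_{v\in M}\sum_{c=0}^{d_v-1}\ket{v,c}\bra{v,c}$ for the orthogonal projector onto the marked subspace, we have $p_M = \|\Pi_M\ket{\psi(t)}\|^2$, so it suffices to show $\|\Pi_M\ket{\psi(t)}\| = O\!\left(d/\sqrt{m}\right)$.

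First I would use that $\ket{\psi(0)} = \ket{\phi_{stat}^a} - \ket{\phi_{i,j}^a}$ together with Theorem~\ref{thm:2_vertex_stationary_state}, which gives $U'\ket{\phi_{stat}^a}=\ket{\phi_{stat}^a}$. Applying $t$ steps of the algorithm and using linearity yields
\begin{equation}
\ket{\psi(t)} = \ket{\phi_{stat}^a} - {U'}^t\ket{\phi_{i,j}^a},
\end{equation}
with $a = 1/\sqrt{2m}$. The point of this split is that the stationary part does not move, so the entire time dependence of the state is confined to the evolved perturbation ${U'}^t\ket{\phi_{i,j}^a}$.

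Next I would apply the triangle inequality to $\Pi_M\ket{\psi(t)}$ to obtain
\begin{equation}
\|\Pi_M\ket{\psi(t)}\| \le \|\Pi_M\ket{\phi_{stat}^a}\| + \|\Pi_M {U'}^t\ket{\phi_{i,j}^a}\|.
\end{equation}
The second term I would bound uniformly in $t$: since $\Pi_M$ is a projector and $U'$ is unitary, $\|\Pi_M {U'}^t\ket{\phi_{i,j}^a}\| \le \|\ket{\phi_{i,j}^a}\| = \sqrt{2}\,ad = d/\sqrt{m}$. For the first term a direct computation suffices: at each marked vertex $\ket{\phi_{stat}^a}$ has $d-1$ amplitudes equal to $a$ and one equal to $-(d-1)a$, so $\|\Pi_M\ket{\phi_{stat}^a}\|^2 = 2a^2\big[(d-1)+(d-1)^2\big] = 2a^2 d(d-1) = d(d-1)/m$. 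Both terms are therefore $O\!\left(d/\sqrt{m}\right)$, and squaring their sum gives $p_M = O\!\left(d^2/m\right)$.

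Every bound above is independent of the step count $t$, so the conclusion holds for any number of steps, as claimed. I do not expect a genuine obstacle here beyond the amplitude bookkeeping of $\ket{\phi_{stat}^a}$ at the marked vertices; the one essential idea is to isolate all time dependence in ${U'}^t\ket{\phi_{i,j}^a}$ and then \emph{discard} its evolution through $\|{U'}^t x\| = \|x\|$, rather than attempting to analyze the actual dynamics of the walk.
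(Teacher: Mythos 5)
Your proof is correct and follows essentially the same route as the paper's: both decompose $\ket{\psi(0)}$ into the stationary state $\ket{\phi_{stat}^a}$ plus the perturbation $\ket{\phi_{i,j}^a}$, use Theorem~\ref{thm:2_vertex_stationary_state} to freeze the first part, and use unitarity to control the norm of the evolved perturbation. The only difference is cosmetic: where the paper finishes with an explicit constrained maximization of the marked-vertex amplitudes, you finish with the triangle inequality $\|\Pi_M\ket{\psi(t)}\| \le \|\Pi_M\ket{\phi_{stat}^a}\| + \|\ket{\phi_{i,j}^a}\|$, which yields the same $O(d^2/m)$ bound (with essentially the same constant) a little more cleanly.
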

\begin{proof}
The only part of the initial state $\ket{\psi(0)}$ which is changed by the step of the algorithm is $\ket{\phi_{i,j}^a} = -ad\left(\ket{i,c_{(i,j)}} + \ket{j,c_{(j,i)}}\right)$, for $a = 1/\sqrt{2m}$.
Since the evolution is unitary, this part will keep its norm unchanged. In this way, we want to find how big amplitudes can get in order to maximize the value of $p_M$. This means we want to maximize the function
\begin{equation}
2(d-1)(a+x_1)^2+2(-(d-1)a-x_2)^2,
\end{equation}  
subject to $2(d-1)x_1^2+2x_2^2 = ||\ket{\phi_{i,j}^a}||^2 = 2a^2d^2$. 
Note that $x_1$ represents the amplitudes going from the marked vertices to unmarked vertices and $x_2$ represents the amplitudes going from one marked vertex to the other. Then, we obtain
\begin{equation}
p_M \leq 2a^2(2\sqrt{(d-1)d^3}+d(2d-1)) = O\left(\frac{d^2}{m}\right).
\end{equation}
\end{proof}

One of corollaries of Theorem~\ref{thm:2_pm} is that if the degree of the marked vertices is constant or if it does not grow as a function of $n$, then for large $n$, the probability of finding a marked vertex will stay close to the initial probability.

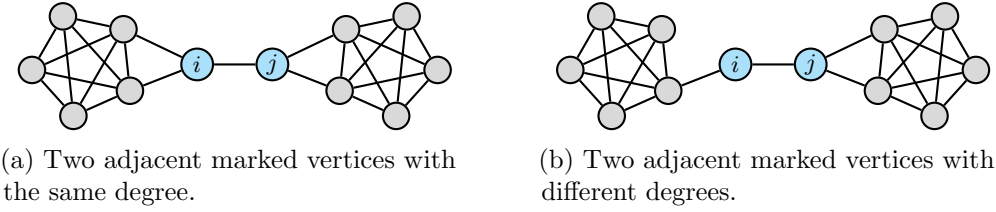
\begin{figure}[!htb]
\centering
\subcaptionbox{Two adjacent marked vertices with the same degree.\label{fig:2_marked_example_a}}{
\centering
\begin{tikzpicture}[auto, thick]
\node[ccyan] (a1) at (-0.5,0) {\footnotesize{$i$}} ;
\node[ccyan] (a2) at (0.5,0) {\footnotesize{$j$}};
\begin{scope}[scale=0.7,rotate=-48]
\foreach \name/\angle/\text in {P-4/234/4, P-5/162/5, 
                                  P-1/90/1, P-2/18/2, P-3/-54/3}
    \node[cgrey,xshift=-2cm,yshift=0cm] (\name) at (\angle:1cm) {};
\foreach \from/\to in {1/2,2/3,3/4,4/5,5/1,1/3,2/4,3/5,4/1,5/2}
    { \draw (P-\from) -- (P-\to);}
\end{scope}
\begin{scope}[scale=0.7,rotate=48]
\foreach \name/\angle/\text in {R-4/234/4, R-5/162/5, 
                                  R-1/90/1, R-2/18/2, R-3/-54/3}
    \node[cgrey,xshift=2cm,yshift=0cm] (\name) at (\angle:1cm) {};
\foreach \from/\to in {1/2,2/3,3/4,4/5,5/1,1/3,2/4,3/5,4/1,5/2}
    {\draw (R-\from) -- (R-\to);}
\end{scope}
\draw[thick] (a1) --  (a2) ;
\draw[thick] (a2) --  (R-5) ;
\draw[thick] (a2) --  (R-1) ;
\draw[thick] (a1) --  (P-2) ;
\draw[thick] (a1) --  (P-1) ;
\end{tikzpicture}
}\hspace{1cm}
\subcaptionbox{Two adjacent marked vertices with different degrees.\label{fig:2_marked_example_b}}{
\centering
\begin{tikzpicture}[auto, thick]
\node[ccyan] (a1) at (-0.5,0) {\footnotesize{$i$}} ;
\node[ccyan] (a2) at (0.5,0) {\footnotesize{$j$}};
\begin{scope}[scale=0.7,rotate=-48]
\foreach \name/\angle/\text in {P-4/234/4, P-5/162/5, 
                                  P-1/90/1, P-2/18/2, P-3/-54/3}
    \node[cgrey,xshift=-2cm,yshift=0cm] (\name) at (\angle:1cm) {};
\foreach \from/\to in {1/2,2/3,3/4,4/5,5/1,1/3,2/4,3/5,4/1,5/2}
    { \draw (P-\from) -- (P-\to);}
\end{scope}
\begin{scope}[scale=0.7,rotate=48]
\foreach \name/\angle/\text in {R-4/234/4, R-5/162/5, 
                                  R-1/90/1, R-2/18/2, R-3/-54/3}
    \node[cgrey,xshift=2cm,yshift=0cm] (\name) at (\angle:1cm) {};
\foreach \from/\to in {1/2,2/3,3/4,4/5,5/1,1/3,2/4,3/5,4/1,5/2}
    { \draw (R-\from) -- (R-\to);}
\end{scope}
\draw[thick] (a1) --  (a2) ;
\draw[thick] (a2) --  (R-5) ;
\draw[thick] (a2) --  (R-1) ;
\draw[thick] (a1) --  (P-2) ;
\end{tikzpicture}
}
\caption{Graphs with two adjacent marked vertices $i$ and $j$ connected to two complete graphs of size $5$. $(a)$ The marked vertices have the same degree. $(b)$ The marked vertices have different degrees.}
\label{fig:2_marked_example}
\end{figure}

Consider the example of the graphs in Fig.~\ref{fig:2_marked_example}.
Intuitively, one might expect that the probability of finding a marked vertex in graph \ref{fig:2_marked_example_a} would be bigger than in \ref{fig:2_marked_example_b}. However, Fig.~\ref{fig:plot_2_marked} shows something different. We calculate the probability of finding a marked vertex and the absolute value of the overlap, $|\braket{\psi(0)}{\psi(t)}|$, for 100 steps of the evolution and when the size of the two complete graphs attached to the marked vertices is 10. The graph \ref{fig:2_marked_example_a} produces a stationary state, and that is why its overlap stays very close to 1 and the probability of finding a marked vertex is smaller than in the other graph.
\begin{figure}[!htb]
\centering
\includegraphics[width=2.8in]{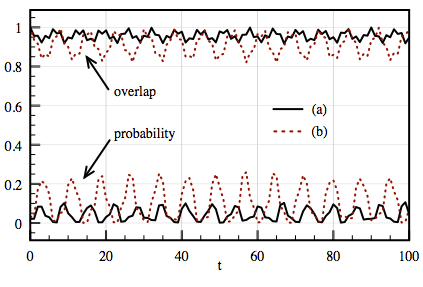}
\caption{Probability of finding a marked vertex, $p_M$, and absolute value of the overlap, $|\braket{\psi(0)}{\psi(t)}|$, for 100 steps of the quantum walk. (Solid line) Graph~\ref{fig:2_marked_example_a} with two adjacent marked vertices with the same degree and complete graphs of size $10$. (Dashed line) Graph~\ref{fig:2_marked_example_b} with two adjacent marked vertices with different degree and complete graphs of size $10$.} \label{fig:plot_2_marked}
\end{figure}

\begin{corollary}
Suppose the set of marked vertices, $M$, can be divided in groups of two adjacent marked vertices without intersecting each other. Let $M'$ be the set of such pairs. Then
\begin{equation}
\ket{\phi_{stat}^a} = \sum_{v=0}^{n-1} \sum_{c=0}^{d_v-1} a\ket{v,c} +\sum_{(i,j)\in M'} \ket{\phi_{i,j}^a}
\end{equation}
is not changed by a step of the algorithm, that is, $U'\ket{\phi_{stat}^a} = \ket{\phi_{stat}^a}$.
\end{corollary}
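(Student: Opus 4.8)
The plan is to repeat the three-step verification of Theorem~\ref{thm:2_vertex_stationary_state}, but now carried out simultaneously at every pair, and to check that the vertex-disjointness of $M'$ is exactly what keeps the corrections attached to different pairs from interfering. First I would record the local structure of $\ket{\phi_{stat}^a}$: because the pairs in $M'$ are vertex-disjoint, every marked vertex lies in exactly one pair, so at a marked vertex $i$ whose partner is $j$ the only amplitude deviating from $a$ is the one in direction $c_{(i,j)}$, where it equals $a-ad_{ij}=-(d_{ij}-1)a$. Here $d_{ij}=d_i=d_j$ is the common degree of that pair, inherited from the hypothesis of Theorem~\ref{thm:2_vertex_stationary_state}, and it may differ from pair to pair. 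Every unmarked vertex still carries amplitude $a$ in all of its directions.

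Next I would handle the coin-and-query part $CQ$. Since $Q$ is diagonal and $C=\bigoplus_v C_{d_v}$ is block-diagonal by vertex, both act vertex by vertex and never mix amplitudes of distinct vertices; hence the computation reduces to the single-vertex computation already done in Theorem~\ref{thm:2_vertex_stationary_state}, applied independently at each vertex. At an unmarked vertex all amplitudes equal $a$, so inversion about the average does nothing. At a marked vertex the query turns the $d_{ij}$ local amplitudes into $(d_{ij}-1)a$ in the partner direction and $-a$ in the remaining $d_{ij}-1$ directions; their average is $0$, so the Grover coin negates each of them and restores the pre-query values. Thus $CQ\ket{\phi_{stat}^a}=\ket{\phi_{stat}^a}$.

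Finally I would verify that the shift $S$ fixes the state. Since $CQ$ has returned every amplitude to its original value, it suffices to check that $S$ permutes the nonuniform amplitudes among themselves. The only amplitudes different from $a$ are the two endpoints of each pair-edge, both equal to $-(d_{ij}-1)a$, and $S$ exchanges precisely these two; every other edge has both endpoints equal to $a$ and is left invariant. \textbf{The step that needs genuine attention is this last one}, since it is the only place where the argument goes beyond Theorem~\ref{thm:2_vertex_stationary_state}: an edge may now join two marked vertices belonging to different pairs. I would point out that on such an edge the modified direction at each endpoint points to its own partner, not across the edge, so both endpoints still carry the amplitude $a$ and the swap is harmless; disjointness of the pairs is exactly what guarantees this. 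Combining the three steps yields $U'\ket{\phi_{stat}^a}=SCQ\ket{\phi_{stat}^a}=\ket{\phi_{stat}^a}$.
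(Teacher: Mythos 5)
Your proof is correct and follows essentially the same route as the paper, which simply states that the corollary follows from the proof of Theorem~\ref{thm:2_vertex_stationary_state}. You usefully make explicit the one point the paper leaves implicit --- that an edge joining marked vertices from two \emph{different} pairs carries amplitude $a$ at both endpoints, so the shift still acts trivially there --- and correctly identify vertex-disjointness of the pairs as what guarantees both this and the zero average at each marked vertex.
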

\begin{proof}
Follows from the proof of Theorem~\ref{thm:2_vertex_stationary_state}.
\end{proof}

\subsubsection{Three adjacent marked vertices}

Now, consider a graph $G=(V,E)$ with three adjacent marked vertices $i$, $j$ and $k$, that is, a marked triangle. The stationary state for this case will have the amplitudes in the marked vertices as depicted in Fig.~\ref{fig:3_marked}.

\begin{figure}[!htb]
\centering
\begin{tikzpicture}[auto, thick]

\node[ccyan] (a1) at (-1,0) {\footnotesize{$i$}} ;
\node[ccyan] (a2) at (1,0) {\footnotesize{$j$}};
\node[ccyan] (a3) at (0,1.5) {\footnotesize{$k$}};

\draw[thick] (a1) -- node [midway,yshift=-15pt]{\footnotesize{$-al_{ij}$}} (a2) ;
\draw[thick] (a1) -- node [midway,xshift=4pt]{\footnotesize{$-al_{ik}$}} (a3) ;
\draw[thick] (a3) -- node [midway,xshift=-5pt] {\footnotesize{$-al_{jk}$}} (a2) ;

\draw[thick] (a3) -- (0.8,2.1);
\draw[thick] (a3) -- (0.2,2.5);
\draw[thick] (a3) -- (-0.8,2.2);

\draw[thick] (a2) -- (1.9,0.3);
\draw[thick] (a2) -- (1.8,-0.5);
\draw[thick] (a2) -- (1.2,-0.9);

\draw[thick] (a1) -- (-1.9,0.3);
\draw[thick] (a1) -- (-1.8,-0.5);
\draw[thick] (a1) -- (-1.2,-0.9);

\node at (1.4,0.3) {\footnotesize{$a$}};
\node at (1.4,-0.4) {\footnotesize{$a$}};
\node at (1,-0.5) {\footnotesize{$a$}};
\node at (1.6,0) {\footnotesize{$\vdots$}};

\node at (-1.4,0.3) {\footnotesize{$a$}};
\node at (-1.4,-0.4) {\footnotesize{$a$}};
\node at (-1,-0.5) {\footnotesize{$a$}};
\node at (-1.6,0) {\footnotesize{$\vdots$}};

\node at (0.5,1.7) {\footnotesize{$a$}};
\node at (0.25,2.1) {\footnotesize{$a$}};
\node at (-0.5,1.75) {\footnotesize{$a$}};
\node at (-0.2,2) {\footnotesize{$\cdots$}};

\draw [decorate,thick,black,decoration={brace,amplitude=10pt},xshift=0pt,yshift=0pt,rotate=-15]
(2,0.8) -- (1.6,-0.8)  node [black,midway,xshift=9pt] {\footnotesize $d_j-2$};

\draw [decorate,thick,black,decoration={brace,mirror,amplitude=10pt},xshift=0pt,yshift=0pt]
(0.9,2.4) -- (-0.9,2.4)  node [black,midway,yshift=23pt] {\footnotesize $d_k-2$};

\draw [decorate,thick,black,decoration={brace,mirror,amplitude=10pt},xshift=0pt,yshift=0pt,rotate=15]
(-2,0.8) -- (-1.6,-0.8)  node [black,midway,xshift=-40pt,yshift=-15pt] {\footnotesize $d_i-2$};

\end{tikzpicture}
\caption{Sketch of amplitudes for the stationary state with three adjacent marked vertices $i$, $j$ and $k$.}
\label{fig:3_marked}
\end{figure}
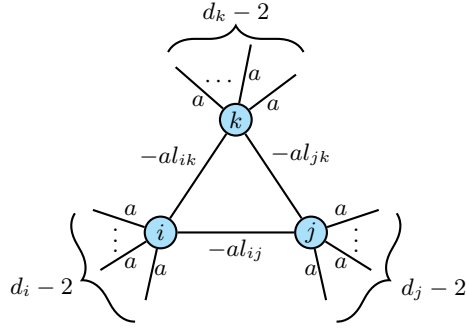

Note that in order to have a stationary state the sum of amplitudes of each marked vertex should be 0, so the action of the coin operator will be a sign flip. 
By solving the following system of equations:
\begin{equation}
\label{eq:3_vertex_generic_stationary_state}
\begin{cases} 
l_{ij} + l_{ik} = d_i-2 \\ 
l_{ij} + l_{jk} = d_j-2 \\
l_{ik} + l_{jk} = d_k-2 \\
\end{cases} 
\end{equation}
we obtain,
\begin{equation}\label{eq:ls}
l_{ij} = \frac{d_i+d_j-d_k}{2}-1,\quad l_{ik} = \frac{d_i+d_k-d_j}{2}-1 \quad{\textrm{and}}\quad l_{jk} = \frac{d_j+d_k-d_i}{2}-1.
\end{equation}

\begin{theorem}
\label{thm:3_vertex_stationary_state}
Let $G = (V,E)$ be a graph with three adjacent marked vertices $i$, $j$ and $k$; and let
\begin{equation}
\begin{split}
\ket{\phi_{i,j,k}^a} =& -a(l_{ij}+1)\left(\ket{i,c_{(i,j)}} + \ket{j,c_{(j,i)}}\right) -a(l_{ik}+1)\left(\ket{i,c_{(i,k)}} + \ket{k,c_{(k,i)}}\right)-\\
&-a(l_{jk}+1)\left(\ket{i,c_{(j,k)}} + \ket{k,c_{(k,j)}}\right),
\end{split}
\end{equation}
where $l_{ij}, l_{ik}$, and $l_{jk}$ are defined in (\ref{eq:ls}). Then,
\begin{equation}
\ket{\phi_{stat}^a} = a\sum_{v=0}^{n-1} \sum_{c=0}^{d_v-1} \ket{v,c} + \ket{\phi_{i,j,k}^a},
\end{equation}
is not affected by a step of the quantum walk on $G$. 
\end{theorem}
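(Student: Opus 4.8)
The plan is to follow the template of the proof of Theorem~\ref{thm:2_vertex_stationary_state}, decomposing the step $U' = SCQ$ and showing separately that $CQ$ and $S$ each fix $\ket{\phi_{stat}^a}$. The query $Q$ multiplies every amplitude incident to a marked vertex by $-1$, and the Grover coin $C$ inverts the amplitudes at each vertex about their local average. On every unmarked vertex all amplitudes equal $a$, so inversion about the average is the identity and $CQ$ does nothing there, exactly as in the two-vertex case. The whole content therefore reduces to a single scalar condition per marked vertex: $Q$ followed by $C$ composes to the identity at a vertex precisely when the \emph{sum} of the amplitudes at that vertex is zero, since then inversion about the average is itself a global sign flip that cancels the query's sign flip.

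The key step is then to verify this vanishing-sum condition at each of $i$, $j$, $k$ directly from the definition of $\ket{\phi_{i,j,k}^a}$. Adding $\ket{\phi_{i,j,k}^a}$ to the flat background shifts the two triangle-edge amplitudes at $i$ from $a$ to $a - a(l_{ij}+1) = -a\, l_{ij}$ and $a - a(l_{ik}+1) = -a\, l_{ik}$, while the remaining $d_i - 2$ amplitudes stay at $a$. Hence the sum at $i$ is $a\big[(d_i-2) - l_{ij} - l_{ik}\big]$, which vanishes by the first equation of the system~(\ref{eq:3_vertex_generic_stationary_state}). The analogous computations at $j$ and $k$ invoke the second and third equations respectively. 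This is precisely why the linear system~(\ref{eq:3_vertex_generic_stationary_state}), and hence the solution~(\ref{eq:ls}), was set up as it was, and it yields $CQ\ket{\phi_{stat}^a} = \ket{\phi_{stat}^a}$.

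Finally I would treat the shift $S$, which swaps the two amplitudes on the endpoints of each edge. For every edge not joining two marked vertices both endpoints carry amplitude $a$, so the swap is invisible. For each of the three triangle edges the construction is symmetric: the amplitude of $i$ pointing to $j$ and that of $j$ pointing to $i$ both equal $-a\, l_{ij}$, and likewise for edges $(i,k)$ and $(j,k)$. Thus $S$ also fixes the state, giving $U'\ket{\phi_{stat}^a} = SCQ\ket{\phi_{stat}^a} = \ket{\phi_{stat}^a}$.

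The main obstacle is not any single calculation but the bookkeeping that converts the coefficients $l_{ij}+1$ appearing in $\ket{\phi_{i,j,k}^a}$ into the correct per-vertex sums; once the three vanishing-sum conditions are recognized as the three rows of~(\ref{eq:3_vertex_generic_stationary_state}), every remaining check is mechanical. I would also note, as a side remark rather than a step of the invariance argument, that for $\ket{\phi_{stat}^a}$ to be a legitimate amplitude pattern one needs $l_{ij}, l_{ik}, l_{jk} \geq 0$ in~(\ref{eq:ls}), i.e. each of $d_i, d_j, d_k$ at most the sum of the other two, so that each marked vertex carries the required number of background edges; this is a constraint on the admissible degree configuration, not on the proof that the step is stationary.
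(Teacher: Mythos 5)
Your proof is correct and follows exactly the route the paper intends: the paper's own proof is just ``Similar to Theorem~\ref{thm:2_vertex_stationary_state}'', and you have filled in precisely the missing detail, namely that the per-vertex zero-sum condition needed for $CQ$ to act as the identity is equivalent to the three rows of the system~(\ref{eq:3_vertex_generic_stationary_state}) defining $l_{ij}, l_{ik}, l_{jk}$. Your closing remark on the non-negativity constraint $l_{ij}, l_{ik}, l_{jk} \geq 0$ is a worthwhile observation that the paper omits.
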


\begin{proof}
Similar to Theorem~\ref{thm:2_vertex_stationary_state}.
\end{proof}

Fig.~\ref{fig:3_marked_example} shows two graphs with three adjacent marked vertices $i,j$ and $k$ connected to three complete graphs. Remember, that for three adjacent marked vertices it doesn't matter if their degrees are different or equal. Both cases will result in a stationary state.

\begin{figure}[!htb]
\centering
\subcaptionbox{Three adjacent marked vertices with $d_i=4, d_j=3$ and $d_k = 5$.\label{fig:3_marked_example_a}}{
\centering
\begin{tikzpicture}[auto, thick]

\node[ccyan] (a1) at (-0.5,0) {\footnotesize{$i$}} ;
\node[ccyan] (a2) at (0.5,0) {\footnotesize{$j$}};
\node[ccyan] (a3) at (0,1) {\footnotesize{$k$}};

\draw[thick] (a1) -- (a2) ;
\draw[thick] (a1) -- (a3) ;
\draw[thick] (a3) -- (a2) ;

\begin{scope}[scale=0.7,rotate=-58]
\foreach \name/\angle/\text in {P-4/234/4, P-5/162/5, 
                                  P-1/90/1, P-2/18/2, P-3/-54/3}
    \node[cgrey,xshift=-2cm,yshift=-0.8cm] (\name) at (\angle:1cm) {};
\foreach \from/\to in {1/2,2/3,3/4,4/5,5/1,1/3,2/4,3/5,4/1,5/2}
    { \draw (P-\from) -- (P-\to);}
\end{scope}

\begin{scope}[scale=0.7,rotate=58]
\foreach \name/\angle/\text in {R-4/234/4, R-5/162/5, 
                                  R-1/90/1, R-2/18/2, R-3/-54/3}
    \node[cgrey,xshift=2cm,yshift=-0.8cm] (\name) at (\angle:1cm) {};
\foreach \from/\to in {1/2,2/3,3/4,4/5,5/1,1/3,2/4,3/5,4/1,5/2}
    { \draw (R-\from) -- (R-\to);}
\end{scope}

\begin{scope}[scale=0.7,rotate=180]
\foreach \name/\angle/\text in {T-4/234/4, T-5/162/5, 
                                  T-1/90/1, T-2/18/2, T-3/-54/3}
    \node[cgrey,xshift=0cm,yshift=2.5cm] (\name) at (\angle:1cm) {};
\foreach \from/\to in {1/2,2/3,3/4,4/5,5/1,1/3,2/4,3/5,4/1,5/2}
    { \draw (T-\from) -- (T-\to);}
\end{scope}

\draw[thick] (a1) -- (P-1) ;
\draw[thick] (a1) -- (P-2) ;
\draw[thick] (a2) -- (R-1) ;
\draw[thick] (a3) -- (T-1) ;
\draw[thick] (a3) -- (T-2) ;
\draw[thick] (a3) -- (T-5) ;

\end{tikzpicture}}
\hspace{1cm}
\subcaptionbox{Three adjacent marked vertices with $d_i=4, d_j=3$ and $d_k = 6$.\label{fig:3_marked_example_b}}{
\centering
\begin{tikzpicture}[auto, thick]

\begin{scope}[scale=0.7,rotate=-58]
\foreach \name/\angle/\text in {P-4/234/4, P-5/162/5, 
                                  P-1/90/1, P-2/18/2, P-3/-54/3}
    \node[cgrey,xshift=-2cm,yshift=-0.8cm] (\name) at (\angle:1cm) {};
\foreach \from/\to in {1/2,2/3,3/4,4/5,5/1,1/3,2/4,3/5,4/1,5/2}
    { \draw (P-\from) -- (P-\to);}
\end{scope}

\begin{scope}[scale=0.7,rotate=58]
\foreach \name/\angle/\text in {R-4/234/4, R-5/162/5, 
                                  R-1/90/1, R-2/18/2, R-3/-54/3}
    \node[cgrey,xshift=2cm,yshift=-0.8cm] (\name) at (\angle:1cm) {};
\foreach \from/\to in {1/2,2/3,3/4,4/5,5/1,1/3,2/4,3/5,4/1,5/2}
    { \draw (R-\from) -- (R-\to);}
\end{scope}

\begin{scope}[scale=0.7,rotate=180,yshift=1.1cm]
\foreach \name/\angle/\text in {T-4/234/4, T-5/162/5, 
                                  T-2/18/2, T-3/-54/3}
    \node[cgrey,xshift=0cm,yshift=2.5cm] (\name) at (\angle:1cm) {};
\node[ccyan,xshift=0cm,yshift=2.5cm] (T-1) at (90:1cm) {\footnotesize{$k$}};
\foreach \from/\to in {1/2,2/3,3/4,4/5,5/1,1/3,2/4,3/5,4/1,5/2}
    { \draw (T-\from) -- (T-\to);}
\end{scope}

\node[ccyan] (a1) at (-0.5,0) {\footnotesize{$i$}} ;
\node[ccyan] (a2) at (0.5,0) {\footnotesize{$j$}};

\draw[thick] (a1) -- (a2) ;
\draw[thick] (a1) -- (T-1) ;
\draw[thick] (T-1) -- (a2) ;

\draw[thick] (a1) -- (P-1) ;
\draw[thick] (a1) -- (P-2) ;
\draw[thick] (a2) -- (R-1) ;

\end{tikzpicture}}
\caption{Graphs with three marked vertices connected to three complete graphs of size $5$.}
\label{fig:3_marked_example}
\end{figure}
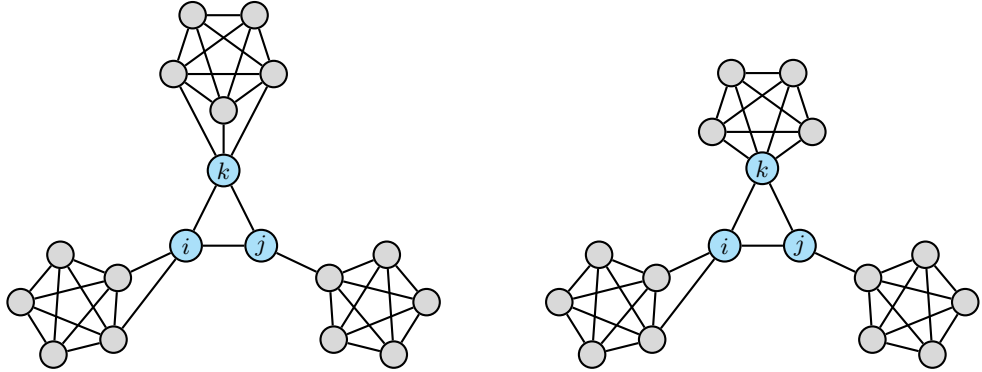

In Fig.~\ref{fig:plot_3_marked}, we can see how the probability of finding a marked vertex depends on the degree of the marked vertices. Although, both graphs have a stationary state, the maximum probability achieved by \ref{fig:3_marked_example_b} is bigger. Moreover, if we consider that the size of the complete graphs can grow, then for graph \ref{fig:3_marked_example_a} the probability of finding a marked vertex will stay close to the initial probability (the degree of the marked vertices does not scales as a function of the number of vertices), while for graph \ref{fig:3_marked_example_b} it will increase ($d_k$ increases with the size of the complete graph). 
\begin{figure}[!htb]
\centering
\includegraphics[width=2.5in]{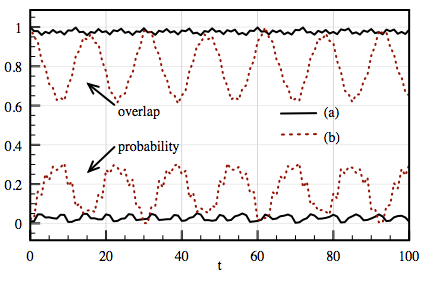}
\caption{Probability of finding a marked vertex, $p_M$, and absolute value of the overlap, $|\braket{\psi(0)}{\psi(t)}|$, for 100 steps of the quantum walk. (Solid line) Graph~\ref{fig:3_marked_example_a} with  complete graphs of size $20$ and three adjacent marked vertices with $d_i = 4, d_j=3$ and $d_k = 5$. (Dashed line) Graph~\ref{fig:3_marked_example_b} with  complete graphs of size $20$ and three adjacent marked vertices with $d_i = 4, d_j=3$ and $d_k = 21$.} \label{fig:plot_3_marked}
\end{figure}

\subsubsection{$k$-clique of marked vertices}
Next, we generalize the previous result for any complete subgraph of marked vertices.

\begin{theorem}
Let $G = (V,E)$ be a graph with $k$ marked vertices $v_{1},\dots,v_{k}$ forming a $k$-clique. Then it forms an exceptional configuration.
\end{theorem}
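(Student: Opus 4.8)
The plan is to produce an explicit stationary state whose deviation from the uniform state is supported only on the edges internal to the clique, and then to argue exactly as in the two- and three-vertex cases that the initial state differs from a stationary state by a vector of small norm, so that $p_M$ stays bounded. First I would posit the candidate
\begin{equation}
\ket{\phi_{stat}^a} = a\sum_{v=0}^{n-1}\sum_{c=0}^{d_v-1}\ket{v,c} - a\sum_{1\le s<t\le k}(l_{st}+1)\left(\ket{v_s,c_{(v_s,v_t)}}+\ket{v_t,c_{(v_t,v_s)}}\right),
\end{equation}
so that every amplitude equals $a$ except on the $\binom{k}{2}$ internal edges, where both orientations carry the value $-a\,l_{st}$. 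As in Theorems~\ref{thm:2_vertex_stationary_state} and \ref{thm:3_vertex_stationary_state}, two conditions guarantee $U'\ket{\phi_{stat}^a}=\ket{\phi_{stat}^a}$: the two orientations of each internal edge must carry equal amplitude, so that the shift $S$ merely swaps equal values (true by construction); and the $d_i$ amplitudes leaving each marked vertex $v_i$ must sum to zero, so that the query $Q$ followed by the Grover coin $C$ (inversion about the mean, which is then $0$) acts as two successive sign flips, i.e. the identity, on the marked vertices.

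Next I would convert the zero-sum condition into a linear system. Vertex $v_i$ has $k-1$ internal edges and $d_i-(k-1)$ edges to unmarked vertices, so requiring its amplitudes to vanish gives
\begin{equation}
\sum_{t\ne i} l_{it} = d_i-(k-1),\qquad i=1,\dots,k,
\end{equation}
a system of $k$ equations in the $\binom{k}{2}$ symmetric unknowns $l_{st}=l_{ts}$. For $k=2$ this is overdetermined, which is precisely why the pair case of Theorem~\ref{thm:2_vertex_stationary_state} required $d_i=d_j$; but for $k\ge 3$ the coefficient matrix is the vertex–edge incidence matrix of $K_k$, which has full row rank $k$ because $K_k$ is connected and non-bipartite, so the system is solvable for \emph{any} degrees. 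I would moreover exhibit the explicit symmetric solution
\begin{equation}
l_{st}=\frac{d_s+d_t}{k-2}-1-\frac{D}{(k-1)(k-2)},\qquad D=\sum_{r=1}^{k}d_{v_r},
\end{equation}
which reduces to Eq.~(\ref{eq:ls}) at $k=3$, and verify it by direct substitution. (The $l_{st}$ need not be integers; arbitrary real amplitudes define a legitimate state.)

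Finally, setting $a=1/\sqrt{2m}$, the initial state decomposes as $\ket{\psi(0)}=\ket{\phi_{stat}^a}+\ket{\chi}$ with $\ket{\chi}=a\sum_{s<t}(l_{st}+1)(\ket{v_s,c_{(v_s,v_t)}}+\ket{v_t,c_{(v_t,v_s)}})$ supported only on internal edges. Since $\ket{\phi_{stat}^a}$ is fixed and $U'$ is unitary, $\ket{\psi(t)}=\ket{\phi_{stat}^a}+U'^{t}\ket{\chi}$ with $\|U'^{t}\ket{\chi}\|=\|\ket{\chi}\|$ for all $t$, and $\|\ket{\chi}\|^2=\frac1m\sum_{s<t}(l_{st}+1)^2$. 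Running the optimization argument of Theorem~\ref{thm:2_pm} — bounding the mass that the fixed stationary part and the norm-preserving deviation can jointly place in the marked region — then yields $p_M=O\left(\|\ket{\chi}\|^2\cdot\mathrm{poly}(k,\max_i d_i)\right)=O(\mathrm{poly}(k,\max_i d_i)/m)$, so $p_M$ stays close to its initial value and the clique is an exceptional configuration. The hard part will be step two: establishing that the system is always solvable for $k\ge 3$ and recognizing that this is exactly why a clique, unlike a bare adjacent pair, needs no degree condition. The incidence-matrix rank argument settles existence in general, while the explicit formula both confirms solvability and feeds directly into the norm bound required for the probability estimate.
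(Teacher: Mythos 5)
Your proof is correct, but it takes a genuinely different route from the paper's. The paper builds the internal-edge amplitudes greedily and recursively: writing $d_{v_j}=(k-1)+d'_j$ and ordering the vertices by external degree, it places the entire weight $-ad'_1$ needed at $v_1$ on the single edge $(v_1,v_2)$, zeroes the other internal edges at $v_1$, and thereby reduces the $k$-clique problem to a $(k-1)$-clique with updated degrees, terminating at the $3$-clique base case of Theorem~\ref{thm:3_vertex_stationary_state}. You instead solve the zero-sum system $\sum_{t\neq i} l_{it}=d_i-(k-1)$ in one shot: the coefficient matrix is the vertex--edge incidence matrix of $K_k$, which has full row rank $k$ for $k\geq 3$ because $K_k$ is connected and non-bipartite, and your explicit symmetric solution $l_{st}=\frac{d_s+d_t}{k-2}-1-\frac{D}{(k-1)(k-2)}$ does satisfy the system and reduces to Eq.~(\ref{eq:ls}) at $k=3$. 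Your route buys two things the paper leaves implicit: a structural explanation of why $k\geq 3$ needs no degree condition while $k=2$ does (the bipartite versus non-bipartite rank dichotomy), and a closed form for the non-stationary component $\ket{\chi}$ that feeds into a quantitative bound on $p_M$ in the style of Theorem~\ref{thm:2_pm}, whereas the paper stops at exhibiting the stationary state. The paper's recursion is more elementary and yields a sparser assignment (most internal edges get amplitude $a$, i.e. $l_{st}=0$). One caveat: your final step $p_M=O(\mathrm{poly}(k,\max_i d_i)/m)$ is only sketched; making it rigorous requires redoing the constrained maximization of Theorem~\ref{thm:2_pm} over all $\binom{k}{2}$ internal directions together with the external ones, which is routine but not automatic from what you wrote.
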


\begin{proof}
Let $d_{v_j} = (k-1) + d'_j$, where $d'_j$ is the number of edges of $v_{j}$ outside the clique. To construct a stationary state, we need to assign amplitudes to internal edges of the clique, so that the amplitudes in vertex $v_{j}$ sum up to $d'_j$.

Without a loss of generality let $d'_1 < d'_2 < \dots < d'_k$. 
We set the amplitude of the edge $(v_{1}, v_{2})$ to $-ad'_1$ and amplitudes of other edges within the clique outgoing from $v_{1}$ to $0$. By this, we have satisfied the condition for the vertex $v_{1}$ and reduced the problem from size $k$ to $k-1$.
 I.e. now we have a $(k-1)$-clique with degrees $(d'_2 - d'_1), d'_3, \dots, d'_k$. Next, we recursively repeat the previous step until we get a $3$-clique, which always have an assignment. In this way, we have constructed a stationary state for a $k$-clique of marked vertices.
 
\end{proof}

Note that any set of marked vertices which can be divided into unique blocks of two adjacent marked vertices with the same degree and/or $k$-clique marked vertices will result in a stationary state.


\subsubsection{General Conditions}

In this section, we describe general conditions in which a state is stationary.

\begin{theorem}[General conditions]\label{thm:general_conditions}
Let $\ket{\psi}$ be a state with the following properties:
\begin{itemize}
\item[1] All amplitudes of the unmarked vertices are equal;
\item[2] The sum of the amplitudes of any marked vertex is 0; 
\item[3] The amplitudes of two adjacent vertices pointing to each other are equal.
\end{itemize}
Then, $\ket{\psi}$ is not changed by a step of the quantum walk, that is, $U\ket{\psi} = \ket{\psi}$.
\end{theorem}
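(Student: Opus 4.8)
The plan is to show that each of the three operators $Q$, $C$, and $S$ that compose $U' = SCQ$ either fixes $\ket{\psi}$ or is composed so that their product does. Since the theorem concerns $U$ rather than $U'$, I first note that on a state satisfying property~2 the query $Q$ and coin $C$ conspire in exactly the way established in Theorem~\ref{thm:2_vertex_stationary_state}: if we were working with $U'=SCQ$, the query would flip the signs at marked vertices and the coin would flip them back. Since here we use $U=SC$ with no query, I must instead verify directly that $C$ alone fixes $\ket{\psi}$, using the two properties of the amplitudes (equal on unmarked vertices, summing to zero on marked vertices) rather than relying on a cancellation with $Q$.

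First I would analyze the action of the coin $C = C_{d_1}\oplus\cdots\oplus C_{d_n}$, which acts vertex-by-vertex as Grover diffusion $C_{d_v} = 2\ket{s_v}\bra{s_v}-I$, i.e.\ inversion about the average of the $d_v$ amplitudes at vertex $v$. For an unmarked vertex, property~1 says all its amplitudes equal the common value $a$, so the average is $a$ and inversion about the average leaves every amplitude fixed. For a marked vertex, property~2 says the sum — hence the average — of its amplitudes is $0$, so inversion about the average $0$ negates each amplitude. The key observation is that the coin treats each vertex independently, so these two cases fully determine $C\ket{\psi}$: it equals $\ket{\psi}$ on unmarked vertices and equals $\ket{\psi}$ with signs flipped at marked vertices.

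Next I would analyze the shift $S$, which by Eq.~(\ref{eq:S}) swaps the amplitude of $\ket{v,c_{(v,v')}}$ with that of $\ket{v',c_{(v',v)}}$ for each edge. I must check that applying $S$ to $C\ket{\psi}$ recovers $\ket{\psi}$. Here property~3 is essential: for every edge, the two amplitudes pointing across it are equal before the coin. After the coin, the amplitude pointing from $v$ toward $v'$ may be negated (if $v$ is marked), and likewise the reverse amplitude. The point to verify is that for each edge the paired amplitudes in $C\ket{\psi}$ are still equal to each other — so that swapping them changes nothing — which requires that an edge's two endpoints flip sign ``in the same way.''

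The main obstacle is exactly this edge-parity bookkeeping in the shift step. I expect to need a short case analysis on each edge $(v,v')$ according to how many of its endpoints are marked: if neither is marked the two coined amplitudes both equal $a$; if both are marked they were equal by property~3 and each gets negated, so they remain equal; the delicate case is when exactly one endpoint is marked, where one amplitude is negated and the other is not, seemingly breaking the equality that lets $S$ act trivially. Resolving this case is the crux: the equality is preserved because property~3 refers to the amplitudes as they appear in $\ket{\psi}$, and one must track that after $S$ the coined-and-shifted amplitude lands back in the correct register with the correct sign to reproduce $\ket{\psi}$. I would handle it by writing out, for a mixed edge, the amplitude that ends up at $\ket{v,c_{(v,v')}}$ after $SC$ and checking it equals the original amplitude there, using that the stationarity condition is about $U=SC$ and that the sign flips and swaps compose to the identity on this register. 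Once all three edge types are dispatched, the conclusion $U\ket{\psi}=\ket{\psi}$ follows.
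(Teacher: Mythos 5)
There is a genuine gap, and it sits exactly where you flagged it: the mixed edge with one marked and one unmarked endpoint. For such an edge $(v,v')$ with $v$ marked, property 3 gives a common amplitude $\alpha$ on $\ket{v,c_{(v,v')}}$ and $\ket{v',c_{(v',v)}}$; after your coin step (with no query) the first becomes $-\alpha$ while the second stays $\alpha$, and the shift then deposits $-\alpha$ onto $\ket{v',c_{(v',v)}}$, which originally held $\alpha$. Unless $\alpha=0$ this is not the identity, and in the concrete stationary state of Theorem~\ref{thm:2_vertex_stationary_state} these mixed-edge amplitudes equal $a\neq 0$. So the case you call ``the crux'' does not resolve; the sentence claiming the equality ``is preserved because property~3 refers to the amplitudes as they appear in $\ket{\psi}$'' is circular, and the conclusion $SC\ket{\psi}=\ket{\psi}$ is simply false for these states.

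The source of the trouble is that you took the statement's ``$U$'' literally. It is evidently a typo for the search operator $U'=SCQ$: every stationary state in the paper is stationary for $U'$, and the paper's own proof of this theorem invokes the query (``previously the sign of these amplitudes were inverted by the query transformation'') and then remarks that the conditions are established for $CQ\ket{\psi}=\ket{\psi}$ and $S\ket{\psi}=\ket{\psi}$. With the query included, the argument you sketched in your first paragraph closes with no edge-by-edge case analysis: $Q$ negates all amplitudes at marked vertices, $C$ negates them back (property 2) and fixes the unmarked ones (property 1), so $CQ\ket{\psi}=\ket{\psi}$; then $S$ only swaps pairs of equal amplitudes (property 3), so $S\ket{\psi}=\ket{\psi}$. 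You should either prove the statement for $U'$ this way, or observe that for $U=SC$ the hypotheses would force every marked-to-unmarked amplitude to vanish, which the intended examples do not satisfy.
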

\begin{proof}
Item $1$ is required in order for the coin transformation to have no effect on the unmarked vertices. It is easy to see that $C_n\ket{u} = \ket{u}$, where $\ket{u} = a\sum_{i=0}^{n-1}\ket{i}$ for some constant $a$.

Item $2$ is necessary so the coin transformation can flip the signs of the amplitudes in the marked vertices. Note that previously the sign of these amplitudes were inverted by the query transformation.

Item $3$ is necessary for the shift transformation to have no effect on the state.
\end{proof}

Note that the aforementioned conditions are established for the case $CQ\ket{\psi} = \ket{\psi}$ and $S\ket{\psi} = \ket{\psi}$.
There might be even more general conditions for the case $U'\ket{\psi} = \ket{\psi}$.


\section{Conclusions}\label{sec:conclusions}

In this paper, we have demonstrated a wide class of exceptional configurations of marked vertices for the quantum walk based search on various graphs. The above phenomenon is purely quantum. Classically, additional marked vertices result in the decrease of the number of steps of the algorithm and the increase of the probability of finding a marked location.
Quantumly, the addition of a marked vertex can drastically drop the probability of finding a marked location.

An open question is whether the found phenomenon has analogs for other models of quantum walks (continuous time quantum walks~\cite{Farhi:1998}, Szegedy's quantum walk~\cite{Szegedy:2004}, staggered quantum walks~\cite{Portugal:2015}, etc.) as well as for alternative coin operators.

Another open question is algorithmic applications of the found effect.
For example, in case of two-dimensional grid the search algorithm can ``distinguish'' between odd-times-odd and even-times-even groups of marked locations. Moreover, if there are multiple odd-times-odd and even-times-even groups of marked locations, the algorithm will find only odd-times-odd groups and will ``ignore'' even-times-even groups.
Nothing like this is possible for classical random walks without adding additional memory resources and complicating the algorithm.
Another example is the general graphs where the algorithm ``ignores'' marked triangles.
We think that the found phenomenon should have algorithmic applications which would be very interesting to find.


\subparagraph*{Acknowledgements.}

The authors thank Andris Ambainis for helpful ideas and suggestions, which was a great help during this research, and Tom Wong for useful comments and careful revision of the manuscript.


\bibliography{Paper}


\end{document}